\documentclass[12pt]{article}

\usepackage{amsmath}
\usepackage{amssymb}
\usepackage{tikz}
\usetikzlibrary{arrows,calc,shapes,decorations.pathreplacing,patterns}
\usepackage{pgfplots}
\usepackage{amssymb}
\newtheorem{theorem}{Theorem}[section]

\newtheorem{corollary}[theorem]{Corollary}

\def\endpf{{\ \hfill\hbox{\vrule width1.0ex height1.0ex}\parfillskip 0pt \\
}}
\newenvironment{proof}{\noindent{\bf Proof:}}{\endpf}

\def\E{\mathbb{E}}

\def\l{\lambda}

\def\eps{\epsilon}

\def\m{\mu}
\def\r{\rho}

\def\t{\theta}

\def\P{\mathbb{P}}

\newcommand{\qed}{\nobreak \ifvmode \relax \else
	\ifdim\lastskip<1.5em \hskip-\lastskip
	\hskip1.5em plus0em minus0.5em \fi \nobreak
	\vrule height0.75em width0.5em depth0.25em\fi}

%\numberwithin{equation}{section}

\bibliographystyle{plain}

\begin{document}

		\title {An optimal mechanism charging for  priority in a queue}
		\author{Moshe Haviv\footnote{Department of Statistics and Data Science, and  the Federmann Center for the Study of Rationality, the Hebrew University of Jerusalem} \ and  Eyal Winter\footnote{ The Management School, the University of Lancaster and the Department of Economics and  the Federmann Center for the Study of Rationality, The Hebrew University of Jerusalem}
		}
		\maketitle
		
		\begin{abstract}

We derive a revenue-maximizing scheme that charges customers  who are homogeneous with respect to their waiting cost parameter for a random fee in order to become premium customers. This scheme incentivizes all customers to purchase priority, each at his/her drawn price. 
We also design a revenue-maximizing scheme for the case where customers are
heterogeneous with respect to their waiting cost parameter.
Now lower cost parameter customers are encouraged to join the premium class at a low price: Given that, those with high cost parameter  would be willing to pay even more for this privilege.

	\end{abstract}
		
	%priority  increases with the measure of priority customers Also, in equilibrium,
	%all (and hence none) belong to the premium class.
	%We show that when customers are homogeneous with respect to their the waiting %cost parameter, each one of them is charged the price that would make him/her %indifferent between buying and not buying, under the belief that all customers %appearing before him/her in some random order have already purchased priority. %In fact, this is the maximin equilibrium for the profit maximizer.

\section{Introduction}
There are numerous examples in which service is granted to an order based not on the standard first-come first-served (FCFS) regime, but rather  on  priority levels held by the customers. For example, in an emergency room patients who need life-saving care are treated before those who complain of a minor fever. Another example is early boarding granted to premium customers 
by airlines. By paying a priority price these passengers can get better seats (if prior seat allocation is not done) and have guaranteed space for carry-on luggage. 

These two examples indicate two reasons for policing a priority queue regime. The first shows that a social objective can be improved by discriminating in favor of some customers based on individual costs due to waiting. In the queueing literature this corresponds to the well-known $C\m$-rule: service is prioritized based on an index each customer possesses. The value of the index is the product of the cost per unit of time in the queue (usually denoted by $C$) and the inverse of the mean service time (denoted by $\m$). See, e.g.,~\cite{Hav13}, pp. 72--73. Here, as in the emergency room, those who suffer more from waiting get priority over those who suffer less. Similarly, since it is socially better that a long task  waits for a short one, rather than the other way around, short tasks have priority over long ones.

The second example shows that firms or monopolies can generate profit by offering priority based on some charge. Customers who compete among themselves for priority might be willing to pay more for it. They might be more willing to do so if they suffer more from waiting and wish to overtake others, or if they suffer from the idea of being overtaken by others, and wish to remove the threat. Note that no extra effort or cost is incurred by the profit-maximizer who administrates the priority system but nevertheless generates more profit to itself. Also note that if all decide to purchase priority,  priority will in fact be granted to none (and payment will not be reimbursed). 

The above examples illustrate  the case where two priority levels exist; however, one can administrate any finite or continuous number of levels,  each of which comes with its own price. 

Another question is the implementation related to the administration of a priority scheme. For example, in order to implement the $C\m$-rule, a central planner needs to know the cost parameter of each of the arriving customers. Yet, asking customers for their cost parameter $C$ comes with the risk that customers will cheat
and represent themselves as having a higher value for $C$
than their actual one.
It is shown in~\cite{MW} and~\cite{AM} that there  exists a price menu for priority levels where purchasing the true priority level across customers is a Nash equilibrium profile. We omit here the exact details of the models described there.

We also like to note that  by administrating a priority
scheme one can  regulate the arrival rate. Usually customers over-congest a queueing system. It is shown in~\cite{Has95} that if customers who decide to join a queue compete for priority levels, so that as the more they pay, the higher  their priority, they end up joining the system at the socially optimal rate. The same phenomenon exists in the model presented in~\cite{MW}. We also like to mention~\cite{HO} where is it shown how a random priority scheme regulates the system in the sense that it results in a socially optimal joining rate. For the observable version of this model, 
a menu of prices, one for each priority level, is designed in~\cite{Al}.
This results in customers  joining the queue only when it is socially optimal to do so.

Finally,   the priority scheme described above (where  customers possess a priority parameter and the one who holds the highest value is the first to be granted service) is not the only one which exists. For example, \cite{HV} and~\cite{HasHav08} deal with a relative priority model in which the probability that one enters service next is proportional to one's priority parameter. Another model is based on accumulating priority. Here customers increase their priority level with the time they are in the queue (as in FCFS) but the rate of increase is individual. Thus, a higher rate reflects a higher priority level. In particular, this model grants priority based both on some priority parameter and on seniority. For more details
see~ \cite{HR} and~\cite{AHOZ} for the cases where the menu of choices being continuous or discrete, respectively.

This paper considers the profit-maximizer point of view where, as in the airline example, there are two priority levels in a queue, say premium and ordinary classes. Here the objective is to design a price mechanism under which the highest possible  revenue is generated. 
The primitive assumption is that once a priority mechanism is introduced, the customers decide, whether or not to pay the price offered to them for belonging to the premium class. 
Of course, they decide while minding only their selfish interest. 
Clearly, for any price mechanism, customers are engaged in a noncooperative game among themselves and we look for the resulting Nash equilibrium, which in turn determines the monopoly's revenue. As we discuss below, the existence of multiple equilibria is common and  indeed this is the case where a uniform charge is offered. See~\cite{HasHav}, pp. 83--85 for further details. In the case of of multiple equilibria, we judge a price mechanism based on the worst equilibrium in terms of the revenue it generates to the operator (a maxmin  criterion).
 
 We deal with two cases. The first, in Section~3,  is when customers are homogeneous with respect to their
cost per unit of time parameter $C$. The second is when they are not homogeneous and hence $C$ can be looked at as a random variable. In both cases we design the maxmin revenue mechanism. In the homogeneous  case, the optimal mechanism charges a random price whose distribution function is derived in Section~3. In the heterogeneous case the optimal charge is cost-parameter dependent. As expected, the higher the cost-parameter, the higher  the charge is. Details are given in Section~4. The queueing model and the required preliminaries from queueing theory are given in Section~2. Section~5 concludes.

\section{The model}
We consider the memoryless single-server $M/M/1$ queue. 
Specifically, there exists a Poisson arrival process with
a rate denoted by $\l$. Service times follow an exponential
distribution with rate denoted by $\m$.
Denote $\l/\m$ by $\r$ and assume for stability (positive recurrence) that $\r<1$. 
Customers belong to two priority groups, where premium customers have preemptive priority over ordinary customers. Within a class, we assume the first-come first-served (FCFS) regime. Suppose that a fraction $q$ of the customers
belong to the priority class, while the rest are ordinary customers. It is well-known, see e.g., \cite{Hav13}, p.~75, that the mean waiting time (service inclusive)
at the former group
is $$W_1(q)=\frac{1}{\m(1- q \r)},$$
while that of the latter group  is
$$W_2(q)=\frac{1}{\m(1-\r)(1- q \r)}.$$
It is then easy to see that
\begin{equation} \label{f(p)}
f(q)=W_2(q)-W_1(q)=\frac{\r}{\m(1-\r)(1-q\r)}.
\end{equation}
As pointed out in~\cite{HasHav}, p.~84, this function is monotone increasing with $q$, $0 \leq q \leq 1$. This implies that the larger  the group of premium customers, the more valuable it is for an individual customer to belong to this group. This is the follow the crowd  (FTC) phenomenon. See~\cite{HasHav}, p.~6, for more details.

Seen  from the individual customer's point of view, and measured in units of time, the value of priority is bounded between $f(0)$ and $f(1)$:
$$f(0)=\frac{\r}{\m(1-\r)} \leq f(q) \leq \frac{\r}{\m(1-\r)^2}=f(1)$$
and, by definition, it is a function of the fraction of customers who belong to the priority class.
On the one hand,
$$f(0)=\frac{1}{\m(1-\r)}-\frac{1}{\m}.$$
This is the gain for one who solely gets top priority as opposed to being an average customer, for example, one who waits in a FCFS queue. On the other hand, 
\begin{equation} \label{f1}
f(1)=\frac{1}{\m(1-\r)^2}-\frac{1}{\m(1-\r)}.
\end{equation}
This is the gain from being an average customer as opposed of being a standby customer (i.e., getting  service only when the server would otherwise be idle). Put differently, one can get this gain if one switches from  the  ordinary class to the  premium class, when all are premium customers. This switch is when priority is most valuable.
 See~\cite{Hav13}, p~.64 and pp.~76--77, for more on standby customers.

 In~\cite{HasHav}, p~.83, a decision model, in fact a non-cooperative game, where customers are charged for belonging to the premium class (where the default is to become an ordinary customer) is introduced. For a comprehensive review on the literature on queueing games see~\cite{Has16}. Denote this common charge by $\t$.
 Due to the monotonicity of $f(q)$, it is clear that the solution to the dilemma of whether or not to purchase priority is simple
in the case where $\t \leq f(0)$: priority should be purchased. This is a dominant strategy: regardless of what the others do,
one is better off paying $\t$ and joining the premium class. Of course, the end result is that all follow this strategy and the resulting queue regime is FCFS. An analogous situation happens in the case where $\t \geq f(1)$: nobody purchases priority 
as this is the dominant strategy. Here too the resulting queue regime is FCFS.

The situation is more involved when $f(0) < \t < f(1)$. Now no dominant strategy exists. In particular, the best response for an individual depends on the fraction $q$. Denote by
$q_e$ the unique value for $q$ that obeys
$f(q_e)=\t$. Then, if $q < q_e$ the best response is not to purchase priority, while if $q>q_e$, it is  to purchase. In the case where $q=q_e$, either purchasing or not, is one's best response. In particular, one is  indifferent between these two options. In fact, any mixing between the two is also one's best response. 

Once a dominant strategy does not exist, one looks for a symmetric Nash equilibrium,  namely, a strategy that if used by all but one player,  is   one's best response.
As shown in~\cite{HasHav}, p.~83, there are three equilibria when $\t \in (f(0),f(1))$: when all purchase, when none do, and when all purchase with probability $q_e$ (namely, a fraction of $q_e$ of the customers become premium customers). Which one of these three equilibria will emerge is not clear. In terms of the stability of the equilibrium, it is possible to see that the equilibrium that is based on mixing with probabilities $q_e$ and $1-q_e$ is unstable: a small deviation in one direction tilts the balance in the same direction. For example, if $q_e+\epsilon$ for $\epsilon>0$ is adopted by all but one player, the unique best response is to purchase too.  In this respect,
the two pure equilibria are stable while is the mixed one is not.  For more on the issue of the stability of equilibria, in particular for the definition of evolutionarily stable strategies (ESSs), see, e.g.,~\cite{HasHav}, p.~15. There is no answer to  question of which equilibrium will be selected (even among the stable ones), so, when judging  a price mechanism, we look at the revenue it generates under the worst possible equilibria. 
  	In particular, we call a price mechanism revenue-maximizer, or optimal,
 		 if the memchanism generates a certain revenue for the monopoly in a unique equilibrium among  customers and there exists no other price mechanism that  yields a higher revenue via in a unique equilibrium.
 	 	Under this definition, charging more than $f(0)$ backfires as  under the worst equilibrium, nobody pays. The optimal flat price is hence $f(0)$. As we will show in the next section, there exist random price mechanisms that generate more income.

\section{Profit maximization: Homogeneous customers }
The main purpose of this article is to introduce a price mechanism that leads, under the resulting unique Nash equilibrium in the game among customers, to a  gain per customer that is greater than $f(0)$ (but still less than $f(1)$). Moreover, the suggested mechanism is optimal. 
The mechanism is based on a random entry fee. Specifically, each arrival will
be asked to draw a random number from a to-be-determined distribution whose support is the interval $[f(0),f(1)]$. Accepting this fee will be the unique equilibrium among customers and the resulting queue regime will again be FCFS. 
Note that in order to ease the exposition we assume that whenever customers are indifferent between purchasing priority or not, they elect for the former option.

\begin{theorem} \label{payment}
	Suppose that the charge for priority is a random
	variable with CDF $F(p)$, where
	\begin{equation} \label{F(p)}
	F(p)=\left\{\begin{array}{cc}
	0 & p \leq f(0), \\
	\frac{1}{\r}-\frac{1}{\m(1-\r)}\frac{1}{p} & f(0) \leq p \leq f(1), \\
	1 & p \geq f(1).
	\end{array}	\right.
	\end{equation}
If, for some  given price $p$, $f(0) \leq p \leq f(1)$, all  who are asked to pay a price that is less than  $p$ do so, then it is uniquely best for one who is asked to pay $p$ to do so  as well, regardless of what those who are asked to pay a price greater then $p$ actually do. Note that a tie exists only in the case when all from the latter group do not pay, and remain ordinary customers.   
\end{theorem}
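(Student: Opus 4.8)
The plan is to fix a tagged customer who has drawn the price $p\in[f(0),f(1)]$, express the equilibrium fraction of premium customers as a function of the (unspecified) behavior of those who drew a higher price, and then reduce everything to one algebraic identity about $F$.

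First I would split the population. By hypothesis every customer who drew a price strictly less than $p$ purchases; since the CDF in \eqref{F(p)} is continuous on $[f(0),f(1)]$ it is atomless, so the mass of customers who drew exactly $p$ — the tagged one included — is zero and is irrelevant. Writing $x\in[0,1]$ for the (arbitrary) proportion of those customers who drew a price strictly above $p$ who decide to purchase, the fraction of premium customers is
\[
q \;=\; F(p) + x\bigl(1-F(p)\bigr),
\]
which is unaffected by the tagged customer's own measure-zero choice and ranges over $[F(p),1]$ as $x$ ranges over $[0,1]$.

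Next I would verify the identity $f\bigl(F(p)\bigr)=p$ for all $p\in[f(0),f(1)]$: this is a one-line substitution of \eqref{F(p)} into \eqref{f(p)}, and it simply records that $F$ restricted to $[f(0),f(1)]$ is the inverse of $f$ restricted to $[0,1]$ (which is, of course, how $F$ was chosen). Since $f$ is strictly increasing in $q$ — as recalled from \cite{HasHav} right after \eqref{f(p)} — and $q\ge F(p)$, the value of priority faced by the tagged customer satisfies $f(q)\ge f\bigl(F(p)\bigr)=p$, with equality exactly when $q=F(p)$, i.e.\ exactly when $x=0$.

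Finally I would read off the best response. Measured in the same (time) units as the charge, a customer's gain from joining the premium class rather than remaining ordinary, when the premium fraction is $q$, is $W_2(q)-W_1(q)=f(q)$; hence the tagged customer's net payoff from purchasing is $f(q)-p\ge 0$, and it is strictly positive whenever at least one higher-priced customer purchases. So purchasing is a best response no matter what the higher-priced customers do, and it is the unique best response unless all of them decline, in which case the customer is indifferent and the convention of breaking ties in favor of purchasing applies — precisely the tie flagged in the statement. I do not anticipate a genuine obstacle: the only points requiring care are that $F$ be atomless (so that the mass below $p$ is exactly $F(p)$ and the tagged customer does not perturb $q$) and that the relevant comparison is with the ordinary class at the same fraction $q$; all the real content sits in the identity $f\circ F=\mathrm{id}$ on $[f(0),f(1)]$.
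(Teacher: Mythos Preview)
Your proposal is correct and follows essentially the same approach as the paper: both hinge on the identity $f\bigl(F(p)\bigr)=p$ (equivalently, the fixed-point equation $p=\frac{\r}{\m(1-\r)(1-F(p)\r)}$), together with the monotonicity of $f$, to conclude that the worst case for the tagged customer is when no one with a higher draw pays. Your version is slightly more explicit about the atomlessness of $F$ and the parameterization by $x$, but the underlying argument is the same.
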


\begin{proof}
Consider a random price mechanism with a cumulative distribution function (CDF) $F(p)$. For a given value for $p$, suppose all those who are asked to pay a price  less than $p$ do so. Then, under the lowest possible value for priority for one who is asked to pay $p$, namely, when all who are asked to pay a price greater than $p$ do not do so, the value of priority, based on~(\ref{f(p)}), equals
$$\frac{\r}{\m(1-\r)(1-F(p)\r)}, \ \ f(0) 
\leq p \leq f(1).$$
Thus, if one is asked to pay $p$, where
\begin{equation}\label{fixpoint}
p=\frac{\r}{\m(1-\r)(1-F(p)\r)},
\end{equation}  
 	then one should still pay. (In practice, one will be asked to pay a little bit less but for the sake of ease of exposition we ignore this issue.) Indeed, one is indifferent between paying or not if and only if all customers who are asked to pay a price greater than $p$ do not do so.  
Solving~(\ref{fixpoint}) for $F(p)$ as a function of $p$
shows that~(\ref{F(p)}) holds.  In particular, $F(f(0))=0$ and $F(f(1))=1.$	
\end{proof}

\noindent {\bf Remark.}
	The mean payment based on the scheme stated in Theorem~\ref{payment} equals
	\begin{equation} \label{MeanP}
	-\frac{\log(1-\r)}{\m(1-\r)}.
	\end{equation}
Of course,
$$f(0)=\frac{\r}{\m(1-\r)}<-\frac{\log(1-\r)}{\m(1-\r)}<
\frac{\r}{\m(1-\r)^2}=f(1).$$

\noindent {\bf Remark.}
A way to implement the above scheme is to announce that the customer who will be in the $F(p)$ percentile to purchase priority will be offered to pay  $p$ for 
it, $f(0) \leq p \leq f(1)$. Customers can be looked at as being in competition over who will be the first to pay. The earlier they do so relative to others, the less they will have to pay. Another interpretation is as follows. Suppose that customers are somehow ordered. The first is offered priority at the price of $f(0)$ (which he/she accepts and pays), the second is offered priority at a somewhat higher price (and is willing to pay more than the first customer since s/he wishes not to be overtaken by the first), the third is offered priority at an even higher price, which s/he duly pays (in order not to be overtaken by the first two is), etc. 

\noindent{\bf Remark.}	
It is possible to see that if instead of $F(p)$ one  implements another CDF $G(p)$ with $G(p) \geq F(p)$ for 
$p$, the same customers' behavior will be induced. Yet, a reduction in the payment per customer will occur. This can easily be argued from the fact that if a nonnegative random variable comes with a CDF $G(p)$, then its mean value equals $\int_{p=0}^{\infty}(1-G(p))\,dp$. See, e.g.,~\cite{Hav13}, p~.3. The following theorem says even more: the scheme suggested in Theorem~\ref{payment} is revenue maximizer.

\begin{theorem} \label{optimal}
	Among all random price mechanisms that lead to the ``all pay" profile being the unique equilibrium, the one suggested in Theorem~\ref{payment} is optimal. Moreover, if a random mechanism comes with a CDF $G(p)$ in which ``all pay" is the unique equilibrium, then $G(p) \geq F(p)$ for all $p \in R$.
\end{theorem}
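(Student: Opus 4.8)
The plan is to prove the second (quantitative) assertion first, since the optimality claim then follows immediately: the mean of a nonnegative random variable with CDF $H$ equals $\int_0^\infty(1-H(p))\,dp$, so if every admissible $G$ satisfies $G\ge F$ pointwise then $\int_0^\infty(1-G(p))\,dp\le\int_0^\infty(1-F(p))\,dp$, i.e.\ every admissible mechanism raises no more revenue per customer than the one of Theorem~\ref{payment}, which is itself admissible. So the whole content is the inequality $G(p)\ge F(p)$ for all $p$, and I would establish it by contraposition: assuming $G(p_0)<F(p_0)$ for some $p_0$, I will exhibit an equilibrium of the customers' game that is not ``all pay''.

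First I would dispose of the easy ranges. For $p\le f(0)$ there is nothing to prove, since $F(p)=0$. For $p\ge f(1)$: if $G$ placed any mass strictly above $f(1)$, a customer drawing such a price would strictly prefer not to pay even when every customer is premium (the value of priority being at most $f(1)$), so ``all pay'' would not be an equilibrium at all, against our hypothesis; hence $G(f(1))=1=F(f(1))$ and $G=F\equiv 1$ on $[f(1),\infty)$. This reduces matters to $p_0\in(f(0),f(1))$, where the real argument lies.

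On $(f(0),f(1))$ I would exploit the fact, already visible in the proof of Theorem~\ref{payment}, that on $[f(0),f(1)]$ the CDF $F$ is exactly the inverse of the strictly increasing function $f$. Thus, writing $q^\star:=F(p_0)$, we have $q^\star\in(0,1)$, $f(q^\star)=p_0$, and $G(f(q^\star))=G(p_0)<F(p_0)=q^\star$. Now consider the aggregate best-response map $\Phi(q):=G(f(q))$ on $[0,1]$: if customers follow the threshold rule ``pay iff the drawn price is at most $f(q)$'' (ties resolved in favour of paying), the realized fraction of premium customers is $\Phi(q)$, so any fixed point of $\Phi$ below $1$ gives a self-consistent, non-``all-pay'' profile. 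Since $\Phi$ is nondecreasing and $\Phi(q^\star)<q^\star$, it maps the complete lattice $[0,q^\star]$ into itself, and Knaster--Tarski yields a fixed point $\hat q\le q^\star<1$ (when $G$ is continuous one may instead apply the intermediate value theorem to $\Phi(q)-q$, which is $\ge0$ at $q=0$ and $<0$ at $q=q^\star$). I would then verify directly that the profile ``pay iff the drawn price is at most $f(\hat q)$'' is a Nash equilibrium — a customer below the threshold strictly prefers to pay, one above strictly prefers not to, one exactly at it is indifferent and pays by the tie-breaking convention, and the resulting fraction is $\Phi(\hat q)=\hat q$ — and since $\hat q<1$ a positive fraction do not pay, so this equilibrium differs from ``all pay'', a contradiction. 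To match the iterative notion of equilibrium used in Theorem~\ref{payment}, I would also note that iterating $\Phi$ upward from $0$ never exceeds $\hat q<1$, so ``all pay'' is not forced.

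The step I expect to be the main obstacle is precisely this fixed-point/crossing argument when $G$ has atoms or flat pieces: then $\Phi(q)-q$ may jump across zero without vanishing, so the intermediate value theorem is unavailable and one must argue monotonically (Knaster--Tarski) and afterwards check carefully that the $\hat q$ produced really corresponds to a genuine equilibrium — in particular that, at the threshold price $f(\hat q)$, the ``indifferent customers pay'' convention makes the realized premium fraction equal to $\hat q$ rather than jumping past it. Everything else — the two boundary ranges and the final integral comparison delivering optimality — is routine.
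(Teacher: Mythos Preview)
Your argument is correct and follows the same strategy as the paper: assuming $G(p_0)<F(p_0)$ for some $p_0$, you exhibit a threshold profile distinct from ``all pay'' that is an equilibrium, contradicting uniqueness. The paper does this more informally by taking $p'=\inf\{p:G(p)<F(p)\}$ and asserting that ``pay iff charged at most $p'$\,'' is an equilibrium, whereas you locate the threshold via a fixed point of $\Phi(q)=G(f(q))$ using Knaster--Tarski; this is a genuine technical improvement, since it cleanly handles atoms and flat pieces of $G$ (exactly the issue you flagged), and it guarantees that the realized premium fraction matches the threshold, something the paper's choice of $p'$ does not obviously deliver. You also make explicit two steps the paper leaves implicit: the boundary ranges $p\le f(0)$ and $p\ge f(1)$, and the passage from the pointwise inequality $G\ge F$ to revenue optimality via $\int_0^\infty(1-H(p))\,dp$.
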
  

\begin{proof} Aiming for a contradiction, let $G(p)$ be another mechanism that comes with ``all pay"  as an equilibrium  profile whose  profit is higher than it is under $F(p)$. Let $p' =\inf_p \{ G(p) < F(p) \}$. Loosely speaking, $p'$ is the ``first"
	 charge where the probability of paying this value or below it is smaller under $G(\cdot)$  than it is under $F(\cdot)$. This is the first point where, at least locally, this mechanism becomes more expensive. Note that our assumption on $G(\cdot)$ implies the existence of $p'$. (Note that the option where $p'=0$ is not ruled out.) This implies that those charged  $p'$
	 are asked to pay more than needed under the assumption that  only a  fraction of $F(p')$ will pay. They, as well as those who are charge more, may now consider not paying. In such a case, it is possible to see that under this mechanism, the profile where all those who are charged up to $p'$ pay, while the rest do not, is a Nash equilibrium outcome, making it an additional equilibrium to the assumed ``all pay" one. In particular, there is no unique equilibrium under this charging scheme. 	 
\end{proof}

\subsection{The discrete approach}	
A possible mechanism is to discretize the random variable whose CDF is $F(p)$, in such a way that each discrete value gets all the probability mass of those values that are greater than it but are still less than the next discrete value. For example, for some choice of an integer~$n$, denote $1/n$ by $\eps$. Then, define
the following $n+1$ values for $p$: $p_i=F^{-1}(i \eps )$,
$0 \leq i \leq n$. Note that as $F^{-1}(y)=\r /\m(1-\r)(1-\r y)$, we get in fact that $p_i=  \r /\m(1-\r)(1-\r i \eps)$. In particular, $p_0=f(0)$ while $p_n=f(1)$. By this construction and by denoting by $P$ the random charge based on the original scheme, it is possible to see that $\P(p_i \leq P \leq p_{i+1})=F(p_{i+1})-F(p_i)=(i+1)\eps-i\eps=\eps$, $0 \leq i \leq n-1$.
Define now a uniform discrete random variable $X$ whose support is composed of the $n$ values $p_i$, with identical probabilities, i.e., 
$\P(X=p_i)=\eps$,  $0 \leq i \leq n-1$. Clearly, $P$ is stochastically dominated by $X$. 

By considering a discrete random charge for priority based on $X$, it is somewhat easier to see the rationale leading to all paying the prescribed price as its unique equilibrium. The argument in fact goes by induction. Specifically, those who are asked to pay $p_0=f(0)$ will certainly do so: they all are willing to pay that regardless of what others do. This fact is realized by those who are asked to pay~$p_1$. Knowing that a fraction of $\eps$ have already paid, they are willing to pay~$p_1$, regardless of what  those who are asked to pay more do. Hence, they  pay what they are asked to. This argument goes on until all pay. It is based on the idea that for any $\eps>0$ paying is always the unique best response for any realization for $X$, given all pay their realized~$X$. Note that
\begin{equation} \label{ExpX}
\E(X)=\eps \frac{\r}{\m(1-\r)}\sum_{i=0}^{n-1}
\frac{1}{1-\r i \eps}.
\end{equation}
Recalling that $\eps=1/n$ and taking the limit when $n$ goes to infinity in~(\ref{ExpX}), we get~(\ref{MeanP}), as expected.

\noindent {\bf Remark.} It is possible to argue for the resulting Nash equilibrium by the process of elimination of weakly dominated strategies. Specifically, as not purchasing priority is weakly dominated by purchasing for those who are offered the lowest price, the former strategy can be eliminated. Given that, the same is hence the case for one who is offered the second lowest price and one's not purchasing strategy can now be eliminated. This goes on until the one who is offered the highest price.

\subsection{When homogeneous customers select their priority level}
A related but different decision model is suggested in~\cite{GH}.
See also~\cite{HasHav}, p. 103. Now it is the customers  who decide (implicitly) on their priority level and
have to pay accordingly: the more they pay, the higher is their
priority level. Ties are broken randomly. The options are  all the nonnegative real numbers.
Modifying the analysis given in~\cite{GH} for the non-preemptive case, it can shown that  there exists a the unique equilibrium payment profile which comes with a random payment whose CDF, denoted by $B(y)$ equals
\begin{equation} \label{eqCDF}
B(y)=1-\frac{1}{\r}+\frac{1}{\r}(\frac{1}{(1-\r)^2}-\mu y)^{-\frac{1}{2}}, \ \ 
0 \leq y \leq \frac{1}{\m (1-\r)^2} -\frac{1}{\m}.
\end{equation}
Note that here customers are ex-ante identical but ex-post, due to the use of a mixed strategy, they are not.

Note that maximal possible payment equals the value for becoming a `sole child' who gets top priority for one who would otherwise be a standby customer. In other words, it is the value of moving from one  extreme situation (the worse) to the other (the best).
Indeed, being a standby customer is  the fate of the one who pays nothing when all pay something. 

The mean payment per customer is then
\begin{equation} \label{meaneq}
\int_{y=0}^{\infty}(1-B(y))\,dy=
\frac{1}{\r}\int_{y=0}^{\frac{1}{\m (1-\r)^2} -\frac{1}{\m}}
(1-(\frac{1}{(1-\r)^2}-y\mu)^{-\frac{1}{2}})\,dy=\frac{\r}{\m(1-\r)^2}.
\end{equation}

Recall that this value equals $f(1)$ (see~(\ref{f1})). It is possible to see that the ratio between the upper bound in~(\ref{eqCDF}) and~(\ref{meaneq})
equals (only) $2-\r$ which indicates that most of the mass of this distribution is at the upper end. Indeed, it is possible to see that $B'(y)$,
which is the corresponding density is monotone increasing, going up
from $(1-\r)^3/(2\m)$ at $0$ to $1/(2\m)$ at $\frac{1}{\m (1-\r)^2} -\frac{1}{\m}$.
One needs to compare~(\ref{MeanP}) with~(\ref{meaneq}) (the latter is larger) but note that the two models differ not only in who determines the price mechanism but, crucially, in
the fact that in the former model there exist only two priority levels while in the latter model there are continuously many. 
   	
\section{Profit maximization: Heterogeneous customers}
Assume now that customers are heterogeneous and they vary in terms of their waiting costs. Specifically, a type~$C$ customer suffers a cost of $C$ per unit of time in the system (service inclusive).  This heterogeneity is modeled
	by assuming that for each individual $C$ is  a nonnegative random variable with a density function $g(c)$ and a CDF $G(c)$, $G(c)=\int_{y=0}^cg(y)\,dy$.

A profit maximizer wishes  to charge as much as possible for priority. Suppose that it is possible to discriminate between customers and charge them for priority an amount  that depends on their $C$  value, denoted by $c$. It makes sense to charge a price that is monotone in $c$ as high-cost customers will be willing to pay more in order to save themselves the same amount of waiting. Moreover, a good profit-maximizing scheme  encourages, by way of a relatively small charge, customers with a lower value for $c$ to purchase priority (and to make it clear that this is what they  do) so as to make high-cost customers willing to pay even more for priority when they realize that the group of those who pay (and hence belong to the premium class) is increasing in size.

\begin{theorem}
	The price mechanism that asks a customer with waiting cost parameter $c$ to pay 
	\begin{equation} \label{payhetero}
	\frac{\r c}{\m(1-\r)(1-G(c)\r)}
	\end{equation}
	for priority leads to ``all pay" being the unique Nash equilibrium for customers. Moreover, this is optimal among all price mechanisms that lead to a unique equilibrium from the customers' side.
\end{theorem}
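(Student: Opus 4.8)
\medskip
\noindent\textbf{Proof plan.}
The plan is to mirror Theorems~\ref{payment} and~\ref{optimal}, with the cost parameter entering multiplicatively. Recall first that when a fraction $q$ of the population holds priority, a type-$c$ customer who becomes premium saves $f(q)$ units of time by~(\ref{f(p)}), hence values priority at $c\,f(q)$; the charge~(\ref{payhetero}) is exactly $c\,f(G(c))$. Two elementary facts then organize everything. Since $f$ is increasing and $G(c)\le 1$, we have $c\,f(G(c))\le c\,f(1)$, so ``all pay'' is a Nash equilibrium: every customer is charged at most his value of priority when the whole population is premium. And $c\,f(G(c))$ is nondecreasing in $c$, so ordering customers by the charge they face coincides with ordering them by cost parameter, the low-cost customers being the ones admitted at the low prices.

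\medskip
\noindent For the first assertion I would run the bootstrap of Section~3.1 in this setting: discretize $C$ into finitely many equal-mass types and apply~(\ref{payhetero}) with the resulting discrete CDF, which tends to the stated mechanism as the mesh refines. The lowest-cost type is charged essentially $c\,f(0)$ and pays regardless of the others, since its value of priority is always at least $c\,f(0)$ and strictly more once this type --- a positive mass of customers --- has paid. Inductively, once every strictly-lower type has been forced to pay, the premium fraction is (to within the discretization) at least $G(c)$, so the value of priority to type~$c$ is at least $c\,f(G(c))$, which equals his charge; hence paying strictly dominates not paying for him as well. Iterated elimination of weakly dominated strategies, swept upward in $c$, therefore forces everyone, so ``all pay'' is the unique surviving profile; in the continuum the only indifference is at the top of the support of $C$, handled by the convention that an indifferent customer buys.

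\medskip
\noindent For optimality I would first describe what ``a unique equilibrium'' requires. Given any competing mechanism that charges customer $x$ (of type $c_x$) a possibly random price $p_x$, with $p_x\le c_x f(1)$ almost surely (necessary for ``all pay'' to be an equilibrium at all), set $q_x:=f^{-1}(p_x/c_x)$, and $q_x:=0$ when $p_x\le c_x f(0)$: this is the least premium fraction at which $x$ is willing to pay, so $p_x\le c_x f(q_x)$. Running the greedy ``force the currently-willing customers, then repeat'' bootstrap, the forced fraction climbs to the least fixed point $Q^\ast$ of the CDF $H$ of the $q_x$'s; and if $Q^\ast<1$ then ``the bootstrapped customers pay, the rest abstain'' is a second Nash equilibrium (with premium fraction $Q^\ast$, since every non-bootstrapped customer has $q_x>Q^\ast$ and therefore strictly prefers to abstain), contradicting uniqueness. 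As $H$ is nondecreasing, $Q^\ast=1$ forces $H(q)\ge q$, i.e.\ $H^{-1}(u)\le u$, for all $u\in[0,1]$. Hence any admissible mechanism earns revenue $\E[p_x]\le\E[c_x f(q_x)]$, and by the Hardy--Littlewood rearrangement inequality --- optimizing first the coupling of $c_x$ with $q_x$ and then the law $H$ of $q_x$ subject to $H^{-1}(u)\le u$ --- this is at most $\E[G^{-1}(U)\,f(U)]=\E[\,c\,f(G(c))\,]$ for $U$ uniform on $[0,1]$. That value is attained by~(\ref{payhetero}), which assigns type $c$ the forcing rank $G(c)$ and the charge $c\,f(G(c))$; so~(\ref{payhetero}) is optimal. (For a mechanism with $p(c)/c$ nondecreasing the constraint is just the pointwise inequality $p(c)\le c\,f(G(c))$; the rearrangement step is what also rules out gaining by overcharging low-cost types in exchange for undercharging high-cost ones.)

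\medskip
\noindent The step I expect to be the real obstacle is this optimality half, and within it the equilibrium-refinement bookkeeping. As is already implicit in Theorems~\ref{payment}--\ref{optimal}, the operative notion is survival of iterated elimination of weakly dominated strategies (equivalently, stability), not bare Nash, since even~(\ref{payhetero}), like the flat price $f(0)$, admits knife-edge threshold Nash equilibria; one must fix this convention precisely and then check that a stalled bootstrap really yields an admissible alternative equilibrium under an arbitrary competitor. As in Section~3.1, routing everything through a discretization of $C$ absorbs the continuum and measure-zero technicalities --- notably the fact that in the continuum~(\ref{payhetero}) only marginally bootstraps everyone, $H(q)\equiv q$.
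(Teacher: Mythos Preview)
Your bootstrap for the first claim is exactly the paper's argument: the paper simply observes that once all types below $c$ have paid, the premium fraction is at least $G(c)$, so type~$c$'s value of priority is at least $c\,f(G(c))$, which equals the charge; it then invokes the induction. Your discretization and iterated-elimination framing just make this explicit, as in Section~3.1.

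For optimality you take a genuinely different and more careful route. The paper's proof is one sentence: it invokes Theorem~\ref{optimal}, whose heterogeneous analog is the pointwise comparison ``if a competing schedule $p(c)$ first exceeds $c\,f(G(c))$ at some type $c'$, then `types below $c'$ pay, the rest abstain' is a second equilibrium.'' That handles deterministic schedules in which $p(c)/c$ is monotone, but does not on its face cover random charges or schedules that overcharge low-cost types while undercharging high-cost ones. Your argument --- encoding each customer's forcing threshold $q_x=f^{-1}(p_x/c_x)$, identifying the bootstrap limit as the least fixed point of the CDF $H$ of the $q_x$'s, reading off $H(q)\ge q$ from uniqueness, and then bounding $\E[c_x f(q_x)]$ by the comonotone value $\E[G^{-1}(U)f(U)]$ via Hardy--Littlewood --- closes precisely that gap: the rearrangement step is what rules out gaining by scrambling the price-to-type assignment, which the paper's pointwise argument does not address. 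Your remark that the operative solution concept is really iterated elimination (or stability) rather than bare Nash, and that the continuum case is best handled through the discretization, is well taken and applies equally to the paper's own reasoning.
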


\begin{proof}
Assume that all customers with a cost--parameter $C$ that is smaller than $c$ belong to the premium class. Then, a customer whose cost--parameter equals~$c$ agrees to pay (at least)
what is prescribed in~(\ref{payhetero})
as this is the lowest possible value for priority for him/her (since all those with a higher value for $C$ do not pay for priority). See~(\ref{f(p)}).
A similar argument to the one given in 
the proof of Theorem~\ref{optimal} for the optimality of the mechanism holds here too and hence will not be repeated.
\end{proof} 

The following corollary now follows.

\begin{corollary}
 The optimal profit equals 
\begin{equation} \label{optimalhetero}
 \frac{\r}{\m(1-\r)}\int_{c=0}^{\infty}\frac{c}{1-G(c)\r}g(c)\,dc.
\end{equation}
It is bounded from below and from above by 
$$\frac{\r\E(C)}{\m(1-\r)} \ \ \mbox{and} \ \ 
\frac{\r\E(C)}{\m(1-\r)^2}
,$$
respectively.
\end{corollary}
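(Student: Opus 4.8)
The approach is entirely elementary: identify the optimal profit as the expectation, with respect to the law of $C$, of the per-type charge supplied by the preceding theorem, and then obtain the two inequalities from a uniform bound on the single varying factor $1/(1-G(c)\r)$.

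First I would note that, by the theorem, the optimal mechanism induces the equilibrium in which every customer pays, a customer of type $c$ paying exactly the amount in~(\ref{payhetero}). Hence the profit per customer is the mean of this (random) payment over the distribution of $C$,
$$\E\!\left(\frac{\r C}{\m(1-\r)\,(1-G(C)\r)}\right)
=\frac{\r}{\m(1-\r)}\int_{c=0}^{\infty}\frac{c}{1-G(c)\r}\,g(c)\,dc ,$$
which is precisely~(\ref{optimalhetero}); the constant $\r/(\m(1-\r))$ may be pulled out of the integral, and since the integrand is nonnegative no further justification of the computation is needed.

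For the bounds, since $G$ is a CDF we have $0\le G(c)\le 1$ for every $c\ge 0$, and $0<\r<1$ by the stability assumption, so $1-\r\le 1-G(c)\r\le 1$ and therefore
$$1\ \le\ \frac{1}{1-G(c)\r}\ \le\ \frac{1}{1-\r}\qquad\text{for all }c\ge 0 .$$
Multiplying through by the nonnegative quantity $\dfrac{\r c}{\m(1-\r)}\,g(c)$ and integrating over $[0,\infty)$, and using $\int_{0}^{\infty}c\,g(c)\,dc=\E(C)$, yields
$$\frac{\r\,\E(C)}{\m(1-\r)}\ \le\ \frac{\r}{\m(1-\r)}\int_{0}^{\infty}\frac{c}{1-G(c)\r}\,g(c)\,dc\ \le\ \frac{\r\,\E(C)}{\m(1-\r)^{2}} ,$$
which are exactly the claimed lower and upper bounds.

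I do not anticipate any genuine obstacle here; the statement is a direct corollary of the theorem together with a monotonicity estimate. The only points worth a word are that $G(c)\r\le\r<1$ keeps the denominator of~(\ref{payhetero}) and of the integrand bounded away from zero (so the expressions are well defined), and that one should assume $\E(C)<\infty$ for~(\ref{optimalhetero}) and the bounds to be meaningful. It is also worth observing that the lower and upper bounds are exactly $f(0)\,\E(C)$ and $f(1)\,\E(C)$, the heterogeneous analogue of the homogeneous fact that the per-customer payment lies between $f(0)$ and $f(1)$; neither bound is attained when $C$ admits a (nondegenerate) density, since equality would force $\frac{1}{1-G(c)\r}$ to be constant, i.e. $G$ to be constant, on the support of $c\,g(c)\,dc$, which is incompatible with absolute continuity.
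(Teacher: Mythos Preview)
Your proposal is correct and is exactly the argument the paper has in mind: the corollary is stated without proof (``The following corollary now follows''), and the intended derivation is precisely to take the expectation over $C$ of the charge in~(\ref{payhetero}) and then bound the factor $1/(1-G(c)\r)$ between $1$ and $1/(1-\r)$.
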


\noindent {\bf Remark.}
This mechanism can be looked at as bribing customers with a low value for $C$ to join the premium class at a low charge, making those
with a high value for $C$ face a tougher situation in case they stay behind in the race for priority. This will make them willing to pay even more than what might at  first sight be considered  as a reasonable  price.

\subsection{When heterogeneous customers select their priority
	parameter}

A related but different decision model is suggested in~\cite{GH}. See also~\cite{HasHav}, p~.103. Here  it is the customers who decide  how much to pay for priority,
having their own cost parameter as their private information. Moreover, the more one pays, the higher one's priority level is (breaking ties randomly).  Assuming that the random variable $C$ comes with a continuous non-zero density along its support, it was shown there that in the unique equilibrium profile, the higher  one's cost parameter is, the higher one's payment (and hence one's priority level) is. Modifying the analysis given in~\cite{GH} for the non-preemptive case, we get that  if one's cost per unit of time is $c$, one's payment equals
$$\frac{2\r}{\m}\int_{y=0}^c\frac{1}{(1-(1-G(y)\r)^3}g(y)\,dy.$$ 	
The mean payment per customer is then of course
\begin{equation} \label{eqhetero}
\frac{2\r}{\m}\int_{c=0}^\infty\int_{y=0}^c\frac{1}{(1-(1-G(y)\r)^3}g(y)\,dy g(c)\,dc.
\end{equation}

One needs to compare~(\ref{optimalhetero})
with~(\ref{eqhetero}) as both assume that  the set of 
priority levels is continuous. The disclaimer we made on such comparisons
at the end of Section~3 holds here too.

\begin{center}
	{\bf Acknowledgment}
\end{center}
 	The first author was patially supported by an Israel Research Fund, Grant no.~511/15.

\end{document}